\newcommand\A{\ensuremath{\mathcal{A}}}
\newcommand\stash{\ensuremath{s_\A}}
\newcommand\stashopt{\ensuremath{s_{\rm opt}}}
\newcommand\ov{\mathsf{ov}}
\newcommand\out{\mathsf{out}}
\newcommand\bigo{\mathcal{O}}
\newcommand\E{\mathcal{E}}
\newenvironment{subproof}[1][\proofname]{
	
	\begin{proof}[#1]
}{
	\end{proof}
}
\newtheorem{prop}{Proposition}
\newtheorem{claim}{Claim}
\newtheorem{lemma}{Lemma}
\newtheorem{theorem}{Theorem}
\newtheorem{corollary}{Corollary}
\begin{document}



\author{Brice Minaud$^1$, Charalampos Papamanthou$^2$}
\date{\small $^1$ Ecole Normale Sup\'{e}rieure, PSL University, CNRS, Inria, France\\
$^2$ Yale University, USA}



\title{Generalized Cuckoo Hashing with a Stash, Revisited}



\maketitle

\begin{abstract}
Cuckoo hashing is a common hashing technique, guaranteeing constant-time lookups in the worst case.
Adding a stash was proposed by Kirsch, Mitzenmacher, and Wieder at SICOMP 2010, as a way to reduce the probability of failure (i.e., the probability that a valid Cuckoo assignment fails to exist).
It has since become a standard technique in areas such as cryptography, where a negligible probability of failure is often required.
We focus on an extension of Cuckoo hashing that allows multiple items per bucket, which improves the load factor. That extension was also analyzed by Kirsch \emph{et al.} in the presence of a stash. In particular, letting $d$ be the number of items per bucket, and $s$ be the stash size, Kirsch \emph{et al.} showed that, for constant $d$ and $s$, the failure probability is $\bigo(n^{(s+1)(1-d)})$. In this paper, we first report a bug in the analysis by Kirsch \emph{et al.} by showing a counter-example leading to an asymptotically-larger probability of failure $\Omega(n^{-d-s-1})$. Then we provide a general analysis and upper bound of the failure probability for (almost) arbitrary $d$ and $s$, instead of just constant, which is useful for applications in cryptography. We finally deduce from the general analysis a tight bound $\Theta(n^{-d-s})$ for the probability of failure, for constants $d$ and $s$.

\end{abstract}






\section{Introduction}
\label{sec:intro}

Cuckoo hashing was introduced by Pagh and Rodler~\cite{cuckoo}, and proceeds as follows. We wish to allocate a set $S$ of $n$ items into two tables $T_1$, $T_2$ of size $m = (1+\varepsilon)n$ each, where $\varepsilon > 0$ is an arbitrary constant.
The construction is parametrized by two uniformly random hash functions $h_1: S \to T_1$ and $h_2: S \to T_2$.
Each item $x \in S$ may be allocated either to $h_1(x)$, or to $h_2(x)$.
Pagh and Rodler prove that a valid Cuckoo assignment (where no two items are assigned to the same location) exists with probability $1-\bigo(n^{-1})$ over the randomness of the hash functions. Cuckoo hashing guarantees constant-time lookups: Any item $x$ can be retrieved by visiting two memory locations. This makes Cuckoo hashing attractive in real-time systems, where worst-case performance is critical as well as in cryptography, where it serves as a core component for certain oblivious algorithms \cite{obliviousHash}. 
Two variations of the above simple Cuckoo hashing construction have been considered in the literature, which are relevant to our paper.

\paragraph{Large buckets for improved load factor}  Dietzfelbinger and Weidling~\cite{balanced} consider a simple tweak of Cuckoo hashing, where up to $d>1$ items can be assigned to the same location in the table. In that setting, there is only a single table $T$ consisting of $m = (1+\varepsilon)n/d$ buckets, each of size $d$. With this tweak, the load factor (ratio of occupied space in the table after allocating all items) improves to $1/(1+\varepsilon)$ and therefore can be made arbitrarily close to the optimum 1 (In the basic construction the load factor is bounded by $1/2$.)

 \paragraph{Adding a stash to handle failures} Kirsch, Mitzenmacher, and Wieder~\cite{kmw} study Cuckoo hashing in the presence of a \emph{stash}. The stash is used to store items that the Cuckoo insertion algorithm would otherwise fail to assign (and thus trigger a rehash). They show that when allowing for a stash of constant size $s$, the probability of failure of Cuckoo hashing becomes $\bigo(n^{-s})$. If $s$ is allowed to vary, a similar bound $\bigo(n^{-s/2})$ was proven in \cite{aumuller}, assuming $s = \bigo(n^{c})$ for sufficiently small constant $c$. In either case, the failure probability decreases exponentially with $s$. Note that having a stash is crucial for both real-time and cryptographic applications, where rehashing can be undesirable. In real-time systems, the rehash approach can be unsatisfactory, since it offers poor worst-case performance guarantees. In some cryptographic applications, rehashes are forbidden entirely, because security proofs rely on the fact that the hash functions are independent of the data being hashed, and rehashing breaks that property~\cite{alibi}\footnote{The issue is subtle: If rehashes are allowed, the hash functions are no longer uniformly random; instead, they are uniformly random conditioned on the event that a rehash is not necessary, and that event depends on the data being hashed.}.
 
\paragraph{Cuckoo graph}
For the formal treatment of our results, we use the well-established notion of a \emph{Cuckoo graph}: A Cuckoo hashing instance of a set $S$ of $n$ items stored in a table $T$ of $n(1+\varepsilon)/d$ buckets of size $d$ each in the presence of a stash of size $s$ is associated with the \emph{Cuckoo graph} $G(n,\varepsilon,d,s)$, with $n(1+\varepsilon)/d$ vertices and $n$ edges $$E = \{(h_1(x),h_2(x)) : x \in S\}\,.$$ This graph is undirected, may contain multiple copies of the same edge ($E$ is a multiset), and may contain loops.
Computing a valid Cuckoo hashing can be translated into operations on the Cuckoo graph, with the correspondence of Table~\ref{tab}. If the stash is allowed to be of size up to $s$, the existence of a valid Cuckoo assignment for hash functions $h_1$ and $h_2$ can be translated into purely graph-theoretic terms (Specifically, it translates to a graph orientability problem with deletions \cite{orientDeletion}.)
We say that the Cuckoo graph is \emph{suitable} if and only if it is possible to remove at most $s$ edges, and orient the remaining edges, such that every vertex has outdegree at most $d$.
We will refer to the \emph{failure probability} as the probability that $G(n,\varepsilon,d,s)$ is not suitable, over the choice of the hash functions $h_1$ and $h_2$.

\begin{table}[bh]
\caption{From Cuckoo hashing to Cuckoo graph.\label{tab} }
\begin{center}
\begin{tabular}{|l|l|}
\hline
\textsf{Cuckoo hashing} & \textsf{Cuckoo graph}\\\hline
Assigning item $x$ to cell $h_1(x)$ & Orienting edge $(h_1(x),h_2(x))$ towards $h_2(x)$\\
Assigning item $x$ to cell $h_2(x)$ & Orienting edge $(h_1(x),h_2(x))$ towards $h_1(x)$\\
Assigning item $x$ to the stash & Removing edge $(h_1(x),h_2(x))$\\
Number of items stored in cell $c$ & Outdegree of vertex $c$\\
\hline
\end{tabular}
\end{center}
\end{table}

\paragraph{Our starting point} In this work, we are interested in studying the failure probability of Cuckoo hashing in the presence of a stash and when buckets of size $d>1$ are used. Kirsch, Mitzenmacher, and Wieder~\cite{kmw} conclude their seminal paper by claiming the following (We paraphrase here to adjust to our notation.)
\begin{prop}[\cite{kmw} Proposition 4.3]
For any constants $\varepsilon > 0$, $d \geq 1 + \ln(1/\varepsilon)/(1-\ln 2)$, and $s \geq 0$, the probability that $G(n,\varepsilon,d,s)$ is not suitable is $\bigo(n^{(s+1)(1-d)})$.
\label{prop:orig}
\end{prop}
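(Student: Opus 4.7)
The plan is to recast the existence of a cuckoo assignment as an edge orientation problem on the random cuckoo multigraph. Each item $x \in S$ contributes an edge $\{h_1(x), h_2(x)\}$ on the $m$ buckets (possibly a loop when $h_1(x) = h_2(x)$), and a valid assignment corresponds to orienting every edge so that each vertex has in-degree at most $d$. By a standard max-flow / Hall-type argument, such an orientation exists iff every sub-hypergraph $(V', E')$ satisfies $|E'| \leq d|V'|$; with a stash of size $s$ we may effectively remove up to $s$ edges, so failure occurs iff some $(V', E')$ has \emph{excess} $|E'| - d|V'| \geq s + 1$.

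Next, I would apply a union bound over minimal witnesses. For parameters $(v, e)$ with $e = dv + s + 1$, the expected number of $v$-vertex sub-hypergraphs containing at least $e$ items is bounded by
\[
\binom{n}{e}\binom{m}{v}\left(\frac{v}{m}\right)^{2e},
\]
since each item has both endpoints in a fixed $V'$ independently with probability $(v/m)^2$. With $m = \Theta(n)$ this evaluates to $\bigo(n^{v-e}) = \bigo(n^{-(d-1)v - (s+1)})$, and the dominant contribution should come from the \emph{smallest} admissible $v$. Recovering the claimed exponent $(s+1)(1-d)$ then amounts to pinning down exactly which family of sub-hypergraphs dominates --- presumably tree-like structures with $v$ on the order of $s+1$ --- and controlling the contribution of obstructions with very small $v$.

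The main obstacle, and the step I would scrutinize most carefully, is this identification of the minimal obstruction. A single bucket receiving $d + s + 1$ items all satisfying $h_1(x) = h_2(x)$ already forms an obstruction with $v = 1$, whose naive union bound contributes $\bigo(n^{-(d+s)})$. Since $d + s < (s+1)(d-1)$ once $d$ and $s$ are both larger than a small constant, this term does \emph{not} fit inside the claimed bound $\bigo(n^{(s+1)(1-d)})$. Any faithful implementation of the plan must therefore either rule out such tiny configurations (for instance by separately bounding the number of collisions $h_1(x) = h_2(x)$ and of parallel edges), or concede the weaker exponent. Reconciling the contributions of the very-small-$v$ and moderate-$v$ regimes is where I expect the argument to be delicate, and plausibly where a miscount could slip in.
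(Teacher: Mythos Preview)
Your suspicion about the $v=1$ obstruction is exactly right, and in fact decisive: the proposition as stated is \emph{false}, and the paper does not prove it but rather refutes it. The configuration you flag --- a single bucket $v$ receiving $d+s+1$ items $x$ with $h_1(x)=h_2(x)=v$ --- genuinely occurs with probability $\Theta(n^{-d-s})$; the paper establishes the lower bound $\Omega(n^{-d-s})$ rigorously by showing that the indicators of this event across different buckets are negatively associated, so the union over all $m$ buckets behaves as $m \cdot \Omega(n^{-d-s-1}) = \Omega(n^{-d-s})$. Since $d+s < (s+1)(d-1)$ whenever $s(d-2) > 1$, this already exceeds the claimed $\bigo(n^{(s+1)(1-d)})$. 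There is no way to ``rule out such tiny configurations'': they are real obstructions, not artifacts of a loose union bound.

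The paper traces the original error to a step asserting $j \geq s+1$ where $j$ counts \emph{vertices} in a minimal witnessing subgraph; the inequality would hold for edges but fails for vertices, precisely because of the $j=1$ loop configuration you identified. Your per-$v$ estimate $\bigo(n^{-(d-1)v-(s+1)})$ is correct, but the sum over $v$ is dominated by $v=1$, not by $v$ of order $s+1$. The corrected result, proved later in the paper, is that the failure probability is $\Theta(n^{-d-s})$; the matching upper bound is obtained by exactly the union-bound-over-subgraphs program you outline, splitting the sum at $j=1$, $1<j<T$, and $j \geq T$ for a threshold $T = \Theta(s)$.
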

Our starting point in this work is showing Proposition 4.3 from~\cite{kmw} cannot hold, by considering a ``bad event" (i.e., one that causes $G(n,\varepsilon,d,s)$ not to be suitable) that has probability $\Omega(n^{-d-s-1})$.
Define $\E_v$ to be the event that there exists a subset of items $X \subseteq S$ of cardinality $d + s + 1$, such that $\forall x \in X, h_1(x) = h_2(x) = v$. Clearly, if $\E_v$ holds, there can be no valid Cuckoo assignment, since the bucket $v$ can hold at most $d$ items, and at most $s$ can be relocated to the stash, and therefore $G(n,\varepsilon,d,s)$ is not suitable. So it is enough to compute a lower bound on $\E_v$, which will also be a lower bound for the failure probability. We give our result directly here.
(A tighter bound will be proved in \Cref{sec:constants}.)

\begin{lemma}[Lower bound of failure probability]\label{lower_lem}
Assume $\varepsilon > 0$, $d \geq 1$ and $s \geq 0$ are constant. The probability that $G(n,\varepsilon,d,s)$ is not suitable is $\Omega(n^{-d-s-1})$.
\end{lemma}
\begin{proof}
Since the hash functions are uniform, the probability that a given $x \in S$ satisfies $h_1(x) = h_2(x) = v$ is $m^{-2}$. 
$\E_v$ may be viewed as the event that there are at least $d+s+1$ successes in a binomial experiment with $n$ trials, each with probability of success $m^{-2} = 1/c^2n^2$ (where $c=(1+\epsilon)/d$ is a constant). Therefore
\begin{align*}
\Pr(\E_v) &= \sum_{k \geq d+s+1}{n \choose k}m^{-2k}(1-m^{-2})^{n-k}\\& \geq (1-m^{-2})^n \sum_{k \geq d+s+1}\left(\frac{n}{k}\right)^k m^{-2k}\\
&=(1-1/(c^2n^2))^n \sum_{k \geq d+s+1}\left(\frac{n}{kc^2n^2}\right)^k\\&\geq (1-1/(c^2n^2))^n \cdot \left(\frac{1}{(d+s+1)c^2n}\right)^{d+s+1}\\&=\Omega(n^{-d-s-1})\,,
\end{align*}
since $(1-1/(c^2n^2))^n \to_{n \to \infty} 1$.
\end{proof}

\noindent \textbf{Origin of the flaw.}
Lemma~\ref{lower_lem} implies that Proposition~4.3 from \cite{kmw} is flawed.
Looking at the proof of Proposition~4.3 in \cite{kmw}, the issue stems from the expression $F \leq \sum_{s+1 \leq j \leq m/(1+\varepsilon)} F(j)$, near the beginning of the proof. The sum should start from $j = 1$, not $j = s+1$. This has a large impact on the final result, because the $F(j)$'s are (roughly) exponentially decreasing; in particular, $F(1)$ dominates the sum.

The argument in \cite{kmw} is, roughly speaking, that each term $F(j)$ can be upper-bounded by $O(n^{j(1-d)})$. Because the authors assume $j\geq s+1$, they get a bound of the form $O(n^{(s+1)(1-d)})$ for the overall sum. Since in reality, the sum must start at $1$, this line of reasoning can only yield a bound $O(n^{1-d})$, which does not depend on the stash size $s$, defeating the point of the analysis.

For that reason, we set out in this article to repair Proposition~4.3, using a different proof technique. Our analysis follows the same structure as the one in $\cite{balanced}$, which previously studied the same problem without a stash. In the end, the new analysis mainly comes down to a finer upper bound for the terms $F(j)$'s, that properly accounts for the presence of a stash.

\noindent \textbf{Our main results.}
Our main results are organized as follows.
\begin{itemize}
\item In \Cref{sec:proof}, we repair Proposition~4.3 from \cite{kmw}, as explained above. In particular we compute an upper bound on the probability of $G(n,\varepsilon,d,s)$  not being suitable for \emph{(almost) arbitrary values} of $d$ and $s$, instead of just constants as in~\cite{kmw}. The only requirements we have is that  $d \geq 1 + \ln(1/\varepsilon)/(1-\ln 2)$ (this condition is tight by \cite[Proposition~4.4]{kmw}) and $1 \leq s \leq m/(10e^4)$. We show the failure probability is 
\[
\bigo\left( \left(\frac{e}{m} \right)^{d+s} + \left(\frac{2e}{m}\right)^{2(d-1)} \left(\frac{\alpha s}{m}\right)^{s+1}  + m\gamma^m \right)\,,
\]
for some constants $\alpha > 0$ and $\gamma < 1$. See Theorem~\ref{prop:up}. Note that the above is negligible in $n$ when $s$ and $d$ are, for example, $\log n$, which is important for applications in cryptography (e.g.,~\cite{DBLP:conf/crypto/BossuatBFMR21}). 

\item In Section~\ref{sec:constants}, we focus on the case that $d$ and $s$ are constants, and give a tight bound $\Theta(n^{-d-s})$ for the probability that $G(n,\varepsilon,d,s)$ is not suitable. We leave as open problem to derive a tight bound for arbitrary $d$ and $s$.
\item Finally, in Section~\ref{algo_result}, we examine how our updated result affects the guarantees of the algorithm that actually does the final assignment (and which was presented in~\cite{kmw}), showing that the assignment algorithm from~\cite{kmw} will output a stash whose size exceeds $s$ with probability $\bigo(n^{-d-s})$---see Corollary~\ref{final_col}.   

\end{itemize}

Throughout this paper, $\varepsilon$ is viewed as a constant, as is standard in the analysis of Cuckoo hashing schemes. In other words, the failure probability is regarded as a function of $n$, $d$, $s$, for fixed $\varepsilon > 0$. Concretely, this means that the hidden constants in the $\bigo()$, $\Omega()$, $\Theta()$ notation may depend on $\varepsilon$.
In~\cite{kmw}, $d$ and $s$ are also regarded as constants, with the same implication.
Parts of our analysis (e.g., in Section~\ref{sec:constants}) will also view $d$ and $s$ as constant.
Which quantities are constant will be explicit in theorem statements.

\section{Failure probability upper bound for general $d$ and $s$}
\label{sec:proof}
In this section, we repair Proposition 4.3 from~\cite{kmw} and prove a generalized result for the failure probability. We stress that unlike Proposition 4.3 from~\cite{kmw}, we do not require $d$ and $s$ to be constants.

In what follows, it will be convenient to use both cuckoo hashing terminology, and cuckoo graph terminology, via the correspondance discussed in the introduction, depending on the situation. In particular, we will sometimes identify an item $x \in S$ with the corresponding edge $(h_1(x),h_2(x)) \in E$. (This is a one-to-one mapping: Even if two items $x \neq y \in S$ are such that $h_1(x) = h_1(y)$ and $h_2(x) = h_2(y)$, recall that $E$ is a multiset, and each item gives rise to a distinct edge.)

Our proof follows the same approach as \cite[Proof of Theorem~1]{balanced}, but has to account for the addition of a stash. We first introduce some necessary notation: For $X \subseteq E$ being a subset of edges, let $\Gamma(X)$ be the set of endpoints of edges in $X$ (E.g., if $X=\{(1,2),(2,3),(1,3),(1,1)\}$ then $\Gamma(X)=\{1,2,3\}$.) Without a stash (\emph{i.e.,} $s=0$), \cite{balanced} observes that $G(n,\varepsilon,d,s)$  is suitable iff $\forall X \subseteq S, |\Gamma(X)| \geq |X|/d$. If the condition fails, \emph{i.e.,} $\exists X, |\Gamma(X)| < |X|/d$, suitability fails ``trivially'' because there is not enough room to store the $|X|$ edges in $\Gamma(X)$. So the previous equivalence may be understood as saying: suitability holds iff it does not fail trivially on any subset of edges. This is still the case when adding a stash. We present the following claim.

\begin{lemma}
$G(n,\varepsilon,d,s)$ is suitable iff for all $X \subseteq S$, $d|\Gamma(X)| + s \geq |X|$.
\label{fact:key}
\end{lemma}

\begin{proof}
If the condition $d|\Gamma(X)| + s \geq |X|$ fails for some subset $X$ of edges, then there is not enough room in $\Gamma(X)$ and the stash to store the edges in $X$.
Indeed, each vertex in $|\Gamma(X)|$ can hold at most $d$ items, and the stash can hold at most $s$ items.
This shows that if $G$ is suitable, then the condition must hold for all $X$. Conversely, assume the graph is not suitable. We are going to build an $X$ such that the condition fails.

Define the \emph{overflow} $\ov(D)$ of a directed graph $D$ as the minimal number of edges to remove such that every vertex has outdegree at most $d$. Let $s'$ be the smallest integer such that the following statement holds: there exists a directed graph $D$ arising from orienting the edges of $G$ such that $\ov(D) = s'$ (in other words, $s'$ is the minimum stash size, across all possible orientations of $G$). The fact that $G$ is not suitable translates to $s' > s$. Fix $D$ witnessing the previous statement. Consider the subset $Y$ of vertices of $D$ that have outdegree strictly more than $d$. Let $Y' \supseteq Y$ be the set of vertices that can be reached from $Y$ by following a directed path. Observe that the outdegree of every vertex $v$ in $Y'$ must be at least $d$. Otherwise, there would exist a directed path from a vertex $v$ in $Y$ to some $w$ in $Y'$ with outdegree $< d$. Flipping the direction of every edge along this path decreases the outdegree of $v$ by 1, increases the outdegree of $w$ by 1, and does not change the outdegree of intermediate vertices. Because $v$ was over capacity (outdegree $> d$) and $w$ was under capacity (outdegree $< d$), this means that after flipping the edges of the path, the overflow has decreased by 1. This would contradict the minimality of $s'$. Let $D' = (Y',X)$ be the subgraph of $D$ induced by $Y'$. Here, $X$ is the set of edges of $D$ whose endpoints are both in $Y'$. Note that $Y' = \Gamma(X)$. By construction, we have $s' = \ov(D)  = \ov(D') = \sum_{v \in Y'} (\out_{D'}(v) - d) = |X| - d|Y'| = |X| - d|\Gamma(X)|$. Since $s' > s$, $X$ witnesses $d|\Gamma(X)| + s < |X|$.
\end{proof}

Lemma~\ref{fact:key} will be useful in proving the main theorem, presented next.

\begin{theorem}\label{prop:up}
Fix a constant $0 < \varepsilon \leq 0.25$. There exist constants $\alpha > 0$ and $\gamma < 1$, such that for every $d \geq 1 + \ln(1/\varepsilon)/(1-\ln 2)$, and $1 \leq s \leq m/(10e^4)$, the probability that $G(n,\varepsilon,d,s)$ is not suitable is
\[
\bigo\left( \left(\frac{e}{m} \right)^{d+s} + \left(\frac{2e}{m}\right)^{2(d-1)} \left(\frac{\alpha s}{m}\right)^{s+1}  + m\gamma^m \right).
\]
If $d$ and $s$ are constants, that expression is $\bigo(n^{-d-s})$.
\end{theorem}



\begin{proof}




By Lemma~\ref{fact:key}, the probability that $G$ is not suitable is equal to
\[
F := \Pr(\exists X \subseteq S: d|\Gamma(X)| + s < |X|)\,.
\]
Now, let $F(j)$ be the probability that there exists a set $Y$ of $j$ vertices and a set $X$ of $dj + s + 1$ edges that satisfy $\Gamma(X) \subseteq Y$.
Note that $F(j)$ must be zero when $dj+s+1 > n$, since $X$ must be a subset of $S$ with $|S| = n$. Therefore, letting $J=(n-s-1)/d$, by the union bound we have
\begin{equation}\label{total_sum}
F \leq \sum_{1\leq j \leq J} F(j)\,,
\end{equation}
Note also that 
\begin{equation}\label{binomial}
F(j)\le {m \choose j} \Pr[I_j\ge dj+s+1]\,,
\end{equation}
where $I_j$ is the number of edges from $S$ whose endpoints both fall into a fixed set $Y$ of size $j$. Since $I_j$ follows a binomial distribution with $\mathbb{E}[I_j]=n(j/m)^2$, we can use the Chernoff–Hoeffding bound, as in~\cite{balanced}, to bound $\Pr[I_j\ge jd+s+1]$, and therefore for $j < J$, Eq.~(13) from \cite{balanced}, in the presence of the stash, becomes
\begin{align}
F(j) \leq {m \choose j}\left( \frac{nj}{m^2(d+s'/j)} \right)^{jd+s'}\left( \frac{n(m^2-j^2)}{m^2(n-jd-s')} \right)^{n-jd-s'}\,,
\tag{13'}\label{eq:13}
\end{align}
where we set for convienience $s'=s+1$ (Some equation indices follow \cite{balanced} when possible for ease of comparison, adding a prime, so (13') is the counterpart of (13) in \cite{balanced}.) Continuing from there and replacing $n$ by $dm/(1+\varepsilon)$ we have
\begin{align}
F(j) &\leq \frac{m^m}{j^j(m-j)^{m-j}} \left( \frac{d}{d+s'/j} \cdot \frac{j}{(1+\varepsilon)m} \right)^{jd+s'}\left( \frac{d(m^2-j^2)}{(1+\varepsilon)m(n-jd-s')} \right)^{dm/(1+\varepsilon)-jd-s'}\notag\\
&= \frac{m^m}{j^j(m-j)^{m-j}} \left( \frac{d}{d+s'/j} \cdot \frac{j}{(1+\varepsilon)m} \right)^{jd+s'}\left( \frac{m^2-j^2}{(1+\varepsilon)m(m/(1+\varepsilon)-j-s'/d)} \right)^{dm/(1+\varepsilon)-jd-s'}\notag\\
&< (1+\varepsilon)^{-jd-s'}m^{m\frac{1+\varepsilon-d}{1+\varepsilon}}j^{j(d-1)+s'}(m-j)^{j-m}\left( \frac{m^2-j^2}{m-(1+\varepsilon)(j+s'/d)} \right)^{d\frac{m-(j+s'/d)(1+\varepsilon)}{1+\varepsilon}}\nonumber \\
&=f(j,\varepsilon)\,.
\tag{14'}\label{eq:14}
\end{align}
We split the sum from Equation~\ref{total_sum} into three cases, $j = 1$, $1 < j < 5s$, and $5s \leq j \leq J$.

\bigskip
\noindent {\bf Case 1:} $j = 1$. By (\ref{eq:13}) we get:
\begin{align}
F(1) &\leq m\left( \frac{d}{d+s'} \cdot \frac{1}{(1+\varepsilon)m} \right)^{d+s'}\left( \frac{n-n/m^2}{n-d-s'} \right)^{n-d-s'}\notag\\
&< m\left( \frac{1}{(1+\varepsilon)m} \right)^{d+s'}e^{d+s'-n/m^2}\notag\\
&= \bigo\left( \left( \frac{e}{m} \right)^{d+s} \right).
\tag{15'}\label{eq:15}
\end{align}
Note that if $s$ and $d$ are constant, this is $\bigo(n^{-d-s})$.

\bigskip
\noindent {\bf Case 2:} $1 < j \le 5s$. Recall that the upper-bound $F(j) \leq f(j,\varepsilon)$ in Equation~(\ref{eq:14}) was computed as a union bound as follows 
\[
F(j) \leq {m \choose j}  \Pr[I_j\ge dj+s+1] \leq f(j,\varepsilon)\,.
\]
It is clear that $\Pr[I_j\ge dj+s+1]$, a function of $\varepsilon$, is decreasing while $\varepsilon$ is increasing because $\Pr[I_j\ge dj+s+1]$ is the upper tail of a binomial experiment, and $\varepsilon$ only decreases the probability of success of each trial in the experiment, while it does not affect any other parameter. It follows that $F(j)  \leq f(j,0)$. Using (\ref{eq:14}) we get
\begin{align}
F(j) &< f(j,0) = m^{m(1-d)}j^{j(d-1)+s'}(m-j)^{j-m}\left( \frac{m^2-j^2}{m-j-s'/d} \right)^{d(m-j-s'/d)}\notag\\
&= m^{m(1-d)}j^{j(d-1)+s'}m^{j-m}(1-j/m)^{j-m} m^{d(m-j-s'/d)} \cdot \left( 1+\frac{j}{m}+\frac{s'}{dm}\cdot\frac{1+j/m}{1-j/m-s'/(dm)} \right)^{d(m-j-s'/d)}\notag\\
&< m^{j(1-d)-s'}j^{j(d-1)+s'}e^{j(m-j)/m}e^{d(m-j-s'/d)\left(j/m+\frac{s'}{dm}\cdot\frac{1+j/m}{1-j/m-s'/(dm)}\right)}\notag\\
&= \left( \frac{j}{m} \right)^{(d-1)j+s'}e^{j(m-j)/m}e^{d(m-j-s'/d)\left(j/m+\frac{s'}{dm}\cdot\frac{1+j/m}{1-j/m-s'/(dm)}\right)}\notag\\
&= \left( \frac{j}{m} \right)^{(d-1)j+s'}e^{j(m-j)/m}e^{dj(m-j-s'/d)/m}e^{d(m-j-s'/d) \cdot \left(\frac{s'}{dm}\cdot\frac{1+j/m}{1-j/m-s'/(dm)}\right)}\notag\\
&< \left( \frac{j}{m} \right)^{(d-1)j+s'}e^{(d+1)j}e^{s'\left( \frac{1+j/m}{1-j/m-s'/(dm)}\right)}\notag.
\end{align}
Because $j < 5s$, $s \leq m/(10e^4)$ and $d$ is bounded below by a constant, $e^{(1+j/m)/(1-j/m-s'/(dm))}$ is upper-bounded by an absolute constant $C$. Reinjecting $C$ we have:
\begin{align}
F(j) &< \left( \frac{j}{m} \right)^{(d-1)j+s'}e^{(d+1)j}C^{s'}\notag\\
&= \left( \left(\frac{j}{m} \right)^{d-1}e^{d+1}\right)^j\left(\frac{Cj}{m}\right)^{s'}.
\tag{17'}\label{eq:17}
\end{align}

In the end, the expression is the same as in \cite{balanced}, with an additional term $(Cj/m)^{s'}$. Denote the rest of the expression by $g(j)$, so that (\ref{eq:17}) becomes $F(j) < g(j)(Cj/m)^{s'}$. It was already observed in \cite{balanced} that in the relevant range $1 < j \leq 5s \leq m/(2e^4)$, $g(j)$ is geometrically decreasing. More precisely, the ratio of one term to the previous term is upper-bounded by a constant strictly less than $1$. It follows that $g(2)$ dominates $\sum_{2 \leq j < 5s} g(j)$.
This yields
\begin{align*}
\sum_{2 \leq j < 5s} F(j) = \bigo\left( g(2) \left(\frac{C\cdot 5s}{m}\right)^{s'} \right)
= \bigo\left( \left(\frac{2e}{m}\right)^{2(d-1)} \left(\frac{\alpha s}{m}\right)^{s+1} \right)\,,
\end{align*}
for some constant $\alpha>0$. Note that if $s$ and $d$ are constant, this is $\bigo(n^{1-2d-s})=\bigo(n^{-d-s})$.

\bigskip

\noindent {\bf Case 3:} $5s < j \leq J=(n-s-1)/d$. The case $5s<j\le m/(2e^4)$ is handled in Case 2. The case $j>m/(2e^4)$ was already shown to be $\bigo(m\gamma^m)$ for some $\gamma < 1$ in \cite{balanced}. Note that if $s$ and $d$ are constant, this is $\bigo(n^{-d-s})$. Therefore overall, if $s$ and $d$ are constant, the probability of failure is $\bigo(n^{-d-s})$. This concludes the proof. 
\end{proof}

\section{Failure probability tight bound for constant $d$ and $s$}\label{sec:constants}
In this section we prove a tight bound for the failure probability, for constant $d$ and $s$. 
\begin{theorem}\label{prop:up2}
Fix a constant $0 < \varepsilon \leq 0.25$. For every constants $d \geq 1 + \ln(1/\varepsilon)/(1-\ln 2)$ and $s\ge 1$, the probability that $G(n,\varepsilon,d,s)$ is not suitable is $\Theta(n^{-d-s})$.
\end{theorem}
\begin{proof}
By Theorem~\ref{prop:up}, we know that for constants $d$ and $s$, the probability of failure is $\bigo(n^{-d-s})$. It is enough to show that the failure probability is $\Omega(n^{-d-s})$. To do that, we will first show that $\Pr(\mathcal{E}_v)=\bigo(n^{-d-s-1})$, where $\mathcal{E}_v$ is the event defined in the introduction. Indeed,
\begin{align*}
\Pr(\E_v) &= \sum_{k \geq d+s+1}{n \choose k}m^{-2k}(1-m^{-2})^{n-k}\\ 
&\leq \sum_{k \geq d+s+1}\frac{n^k}{k!} m^{-2k}\\
&= \sum_{k \geq d+s+1}\frac{n^{-k}}{k!} \left(\frac{1+\varepsilon}{d}\right)^{-2k}\\
&\leq n^{-d-s-1}\sum_{k \geq d+s+1}\frac{1}{k!} \left(\frac{1+\varepsilon}{d}\right)^{-2k}\\
&\leq n^{-d-s-1}e^{\left(\frac{1+\varepsilon}{d}\right)^{-2}}\\
&= \bigo(n^{-d-s-1}).
\end{align*}
Therefore, by Lemma~\ref{lower_lem}, we have $\Pr(\mathcal{E}_v)=\Theta(n^{-d-s-1})$. We continue by showing that the failure probability is bounded from below by $m\Pr(\mathcal{E}_v)/2$, thus completing the proof. 

Indeed, consider the binary random variable $I_v$ that is equal to $0$ iff $\E_v$ occurs. Likewise, let $I$ denote the binary random variable that is equal to $0$ iff $\E = \bigvee_v \E_v$ occurs. Observe $I = \prod_v I_v$.
\begin{claim}\label{neg}
Variables $I_v$ are negatively associated.
\end{claim}
\begin{subproof} For $v \in V$, let $B_v = |\{x \in S : h_1(x) = h_2(x) = v\}|$ denote the number of edges that loop in on $v$ in the Cuckoo graph. Let $B' = |\{x \in S : h_1(x) \neq h_2(x)\}|$ denote the number of edges that are not loops, so that $B' + \sum_v B_v = n$.
By~\cite[Theorem~13]{neg}, the $B_v$'s together with $B'$ are negatively associated.
A fortiori the $B_v$'s are negatively associated. We have $I_v = f(B_v)$ where $f$ is the non-increasing function that is equal to $1$ if its input is strictly less than $d+s+1$, and $0$ otherwise.
Hence by~\cite[Proposition~7.2]{neg}, random variables $I_v$ are negatively associated. 
\end{subproof}

By~\cite[Proposition~3]{neg} and Claim~\ref{neg},  we have $\mathbb{E}(\prod_v I_v) \leq \prod_v \mathbb{E}(I_v)$. We therefore get
\begin{flalign*}
&&\Pr(\E) &= 1 - \mathbb{E}(I)&&\\
&&&= 1 - \mathbb{E}\Big(\prod_v I_v\Big)&&\\
&&&\geq 1 - \prod_v \mathbb{E}(I_v)&&\\
&&&= 1 - (1-\Pr(\E_v))^m&&\text{for any fixed $v$}\\
&&&\geq 1 - e^{-m\Pr(\E_v)}&&\text{using $1-x \leq e^{-x}$.}
\end{flalign*}
We have already established that $m\Pr(\E_v)=\Theta(n^{-d-s})$ (since $\Pr(\mathcal{E}_v)=\Theta(n^{-d-s-1})$), so we can freely assume $m\Pr(\E_v) < 1$ for $n\ge n_0$ where $n_0$ is a fixed constant. Using the fact that $1-e^{-x} > x/2$ for $x$ in $[0,1]$, we get
\[
\Pr(\E) \geq m\Pr(\E_v)/2 = \Omega(n^{-d-s})\,,
\]
which concludes the proof, since the the failure probability is bounded from below by $\Pr(\mathcal{E})$.
\end{proof}

\section{Algorithmic results\label{algo_result}}

In the previous sections, we have identified a mistake in \cite[Proposition~4.3]{kmw}, and repaired the proposition by proving a new result, with a corrected bound (\Cref{prop:up}). \cite[Proposition~4.3]{kmw} is later used to prove \cite[Theorem~4.2]{kmw}. As a result, \cite[Theorem~4.2]{kmw} becomes invalid\textemdash both the proof, and the bound in the theorem statement are incorrect. In this section, we repair \cite[Theorem~4.2]{kmw}. No other result in \cite{kmw} depends on \cite[Proposition~4.3]{kmw}, and to the best of our knowldege, the rest of the article is correct.

\Cref{prop:up} is a statement about the \emph{existence} of a solution to a problem. It provides an upper bound on the probability that a certain allocation problem fails to have a solution.
A natural question is how to find such a solution algorithmically. Kirsch \emph{et al.} propose an algorithm for that purpose. Using the terminology from \Cref{sec:intro}, the algorithm can be formulated as follows.

We start from a graph with vertex set $V = T_1 \cup T_2$, and no edge. Each item $x \in S$ is inserted in turn. To insert item $x$, the directed edge $(h_1(x),h_2(x))$ is added to the graph. The algorithm then looks for a directed path starting from $h_1(x)$, and ending in any vertex with outdegree strictly less than $d$. Such a path is called an \emph{augmenting path}. To find an augmenting path, the algorithm performs a breadth-first search starting from $h_1(x)$. If an augmenting path is found, the direction of every edge along the path is flipped. If the breadth-first search exhausts the vertices reachable from $h_1(x)$ without finding an augmenting path, the item $x$ is added to the stash; in that case, the edge $(h_1(x),h_2(x))$ is removed from the graph. Observe that this insertion algorithm preserves the invariant that every vertex has outdegree at most $d$. As a consequence, the graph $G$ output by the overall algorithm sastisfies $\Delta^+(G) \leq d$. Let us call that algorithm $\mathcal{A}$.

As noted in \cite{kmw}, results about the insertion time of $\mathcal{A}$ can be deduced directly from the analysis in \cite{balanced}. For that reason, Kirsch \emph{et al.} focus their analysis of $\mathcal{A}$ on the distribution of the stash size (\emph{i.e.} the distribution of the number of items that are sent to the stash by $\A$). The following result is claimed.

\begin{theorem}[\cite{kmw}, Theorem~4.2]
\label{thm:alg}
Les $\stash$ denote the size of the stash after all $n$ items have been inserted using algorithm $\A$. For small enough $\varepsilon > 0$ and $d \geq 15.8\cdot \ln(1/\varepsilon)$, and any integer $s \geq 2$, we have $\Pr(\stash \geq s) = O(n^{sd(D-1)})$, where $D = d^{-1/3} + d^{-1}$. For $s = 1$, we have $\Pr(\stash \geq s) = O(n^{1-d})$.
\end{theorem}

As noted earlier, the proof of \Cref{thm:alg} relies on \cite[Proposition~4.3]{kmw}, which we have shown to be false. In fact, the bound $O(n^{sd(D-1)})$ given in the theorem cannot hold, because it is supposed to bound to the probability that the algorithm fails to find a solution; but after correcting \cite[Proposition~4.3]{kmw}, it is now asymptotically lower than the probability $\Theta(n^{-d-s})$ that a solution exists in the first place (\Cref{prop:up2}).

To repair Theorem~4.2, we use a different approach from \cite{kmw}: We show that algorithm $\A$ is \emph{optimal}, in the sense that it minimizes the number of items sent to the stash. More precisely, given any fixed pair of hash functions $h_1, h_2$ (to avoid cluttering notation, we leave the hash functions as implicit parameters), let $\stash$ denote the size of the stash at the outcome of algorithm $\A$. On the other hand, let $\stashopt$ denote the smallest stash size such that a solution exists; formally, using the notation from \Cref{sec:intro}:
\[
\stashopt = \min \{ s : G(n,\varepsilon,d,s)\text{ is suitable}\}.
\]

\begin{theorem}
For any any pair of hash functions $(h_1, h_2)$, it holds that $\stash = \stashopt$.
\label{thm:opt}
\end{theorem}

\begin{proof}
Since $\A$ outputs a valid solution to the cuckoo allocation problem, $\stash \geq \stashopt$ holds trivially.
We now prove the converse.

First, let us introduce some notation. Given a directed graph $G = (V,E)$, let $G^u = (V,E^u)$ denote the undirected graph obtained by forgetting edge orientations in $G$.
Given a graph $G = (V,E)$, and $V' \subseteq V$ a subset of vertices, let $G[V'] = (V',E[V'])$ denote the subgraph of $G$ induced by $V'$, that is: $E[V'] = \{(u,v) \in E : u,v \in V'\}$.
Finally, a vertex $v$ in a directed graph $G$ is said to be \emph{full} if there is no augmenting path from $v$, \emph{i.e.} no directed path from $v$ to any vertex $w$ with $\deg^+(w) < d$. (Paths of length zero are allowed, hence $v$ being full implies $\deg^+(v) \geq d$.) The key lemma is the following.

\begin{lemma}
\label{fact:key2}
Let $G = (V,E)$ be a directed graph such that $\Delta^+(G) \leq d$, and let $V' \subseteq V$. The following three properties are equivalent.
\begin{itemize}
\item[(1)] For all $v \in V'$, $v$ is full in G.
\item[(2)] There exists $V'' \supseteq V'$ such that $|E[V'']| \geq d |V''|$.
\item[(3)] There exists $V'' \supseteq V'$ such that $|E[V'']| = d |V''|$.
\end{itemize}
\end{lemma}

\begin{subproof}
$(2) \Rightarrow (1)$. Assume $(2)$.
We have:
\[
d |V''| \leq |E[V'']| = \sum_{v \in V''} \deg^+_{G[V'']}(v) \leq \sum_{v \in V''} \deg^+_G(v) \leq \Delta^+(G)|V''| \leq d|V''|.
\]
Hence all inequalities are equalities, which implies $\deg^+_{G[V'']}(v) = \deg^+_G(v) = d$ for all $v \in V''$. Hence, there is no edge from $V''$ to $V \setminus V''$, and all vertices in $V''$ have outdegree $d$. Hence all vertices in $V'$ are full.

$(1) \Rightarrow (3)$.  Assume $(1)$. Let $V''$ be the set of vertices reachable from $V'$ in $G$. All vertices in $V''$ must have outdegree at least $d$, since the vertices of $V'$ are full. Since $\Delta^+(G) \leq d$, all vertices in $V''$ have outdegree exactly $d$. On the other hand, by construction, there is no edge from $V''$ to $V \setminus V''$. In conclusion:
\[
|E[V'']| = \sum_{v \in V''} \deg^+_{G[V'']}(v) = \sum_{v \in V''} \deg^+_G(v) = d|V''|.
\]

$(3) \Rightarrow (2)$ is trivial. This concludes the proof.
\end{subproof}

Intuitively, \Cref{fact:key2} says that, assuming $\Delta^+(G) \leq d$, whether $v$ is full in $G$ is entirely determined by the underlying undirected graph $G^u$. Indeed, observe that property $(2)$ in \Cref{fact:key2} depends only on $G^u$.\footnote{The fact that having no augmenting path, ostensibly a property of a directed graph, is entirely determined by the underlying undirected graph, may appear surprising, but it is a common occurence in combinatorial optimization: see \cite[Chapter~61]{schrijver}, and the discussion in \Cref{sec:related}.} 

\begin{corollary}
\label{corol:key2}
Let $G_0 = (V,E_0)$, $G_1 = (V,E_1)$ be two directed graphs such that $\Delta^+(G_i) \leq d$, and $E_0^u \subseteq E^u_1$. If a vertex $v$ is full in $G_0$, then it is full in $G_1$.
\end{corollary}

\begin{subproof}
This follows directly from the equivalence $(1) \Leftrightarrow (2)$ in \Cref{fact:key2}.
\end{subproof}

We are now ready to prove $\stash \leq \stashopt$.
Let $G = G(n,\varepsilon,d,s)$ be the cuckoo graph defined in \Cref{sec:intro}.
That is, $G = (V,E)$ with $V = T_1 \cup T_2$, and $E = \{\{h_1(x), h_2(x)\} : x \in S\}$.

Recall that algorithm $\A$ starts from an empty graph $G_0 = (V,\varnothing)$. Assume any fixed insertion order $x_1, \dots, x_{|S|}$ on the set $S$ of items.
Let $G_i = (V,E_i)$ denote the graph obtained after algorithm $\A$ has inserted items $x_1, \dots, x_i$. Let $s = |S|$, so that $G_s$ is the graph at the output of algorithm $\A$.
Let $S_\A \subseteq S$ be the set of items sent to the stash by $\A$.
Note that $G^u_s$ is almost the same as $G$, except it is missing the edges corresponding to items in $S_\A$; that is:
\begin{equation}
\label{eq:a}
E = E_s^u \cup \{\{h_1(x),h_2(x)\} : x \in S_\A\}.\tag{$*$}
\end{equation}

By construction of $\A$, whenever an item $x_i \in S_\A$ is sent to the stash, it must be the case that $h_1(x_i)$ and $h_2(x_i)$ are full in $G_{i-1}$. By \Cref{corol:key2}, it follows that $h_1(x_i)$ and $h_2(x_i)$ are also full in $G_s$. Letting $V' = \{h_1(x), h_2(x) : x \in S_\A\}$, the vertices of $V'$ are full in $G_s$. Using \Cref{fact:key2} again, there exists $V'' \supseteq V'$ such that $|E_s[V'']| = d |V''|$. Using (\ref{eq:a}), this yields :
\[
|E[V'']| = |E_s[V'']| + \stash = d |V''| + \stash.
\]
On the other hand, all edges in $E[V'']$ correspond to items that must be stored in $V''$, since both of their endpoints are in $V''$. But the vertices of $V''$ can accomodate at most $d|V''|$ items. It follows that at least $s_\A$ items cannot be stored, and must be sent to the stash. Hence $\stashopt \geq \stash$.
\end{proof}

As a direct corollary, the probability that algorithm $\A$ outputs a solution with a stash of size at most $s$ is equal to the probability that such a solution exists. Hence, the bound from \Cref{prop:up} applies directly to algorithm $A$. For simplicity, we only write the corollary in the case that $s$ and $d$ are constant.

\begin{corollary}\label{final_col}
Les $\stash$ denote the size of the stash after all $n$ items have been inserted using algorithm $\A$. Assume $0 < \varepsilon \leq 0.25$, $d \geq 1 + \ln(1/\varepsilon)/(1-\ln 2)$, and $1 \leq s \leq m/(10e^4)$ are constant. Then $\Pr[\stash > s] = \bigo(n^{-d-s})$.
\end{corollary}

\section{Related work}
\label{sec:related}

Cuckoo hashing was introduced by Pagh and Rodler~\cite{cuckoo}. A variant where each table location can receive $d>1$ items was studied by Dietzfelbinger and Weidling~\cite{balanced}, and a variant with $k > 2$ hash functions was analyzed by Fotakis \emph{et al.}~\cite{fotakis}. Later, Kirsch, Mitzenmacher and Wieder introduced cuckoo hashing with a stash, proving that for all three aforementioned variants of cuckoo hashing, the failure probability decreases exponentially with the stash size~\cite{kmw}. The present work corrects the results of~\cite{kmw} pertaining to the variant with $d > 1$ items per cell. As discussed in \Cref{sec:proof}, our main proof owes much to the analysis in~\cite{balanced}.
In a different direction, Aum\"uller \emph{et al.} have shown that the results of \cite{kmw} regarding cuckoo hashing with a stash can be modified to work with explicit hash function families, rather than uniformly random hash functions~\cite{aumuller}.

Cuckoo hashing relates to balls-and-bins games, and particularly to the so-called two-choice process~\cite{azar,power}. In the simplest version of the two-choice process, $n$ balls are thrown into $m$ bins, one after the other. For each ball, two bins are selected uniformly at random. The ball is inserted into whichever bin holds the fewest balls at insertion time. Cuckoo hashing follows a similar premise, with the difference that whenever a new ball is inserted, the choices made for previous balls can be revisited. That is, each ball can be moved at any time between its two possible locations, rather than this being decided only at insertion time. Our main theorem (\Cref{prop:up}) can be reinterpreted from that perspective: it provides a bound on the probability that the latter variant of the two-choice process admits a solution where the most loaded bin contains at most $d$ items, if we are allowed to ``skip'' inserting at most $s$ balls. The result of Dietzfelbinger and Weidling cited earlier is written from that perspective~\cite{balanced}.

As noted in the introduction, Cuckoo hashing can also be cast as a graph orientability problem. In that light, \Cref{fact:key} and \Cref{thm:opt} are variants of a result due to Hakimi~\cite{hakimi}, which gives a similar equivalent condition for a graph to admit an orientation where the indegree of each vertex $v$ is at least equal to a prescribed value $\ell(v)$. Our own lemmas correspond to the setting where $\ell(v)$ is constant, but we allow for a stash (i.e. at most $s$ edges can be deleted). A result of Frank and Gy\'arf\'as gives a similar condition when one wishes to prescribe both a lower and upper bound on the indegree of each vertex~\cite{frank}. This line of results belongs to the family of so-called ``min-max'' theorems in combinatorial optimization. In particular, they are closely related to linear programming duality. We refer the reader to \cite{schrijver} for an excellent exposition of those relationships.

\subsection*{Acknowledgments}

The authors would like to thank Adam Kirsch, Michael Mitzenmacher, Udi Wieder, as well as Martin Dietzfelbinger, for their helpful comments. This work was supported by the ANR JCJC project SaFED and by the National Science Foundation.

\bibliographystyle{elsarticle-num}
\bibliography{biblio}







\end{document}